\theoremstyle{plain}
\newtheorem{theorem}{Theorem}[section]
\newtheorem{corollary}[theorem]{Corollary}
\theoremstyle{definition}
\newtheorem{definition}[theorem]{Definition}
\newtheorem{remark}[theorem]{Remark}
\newtheorem{example}[theorem]{Example}
\newcommand{\bx}{\mathbf x}
\newcommand{\by}{\mathbf y}
\newcommand{\be}{\mathbf e}
\DeclareMathOperator{\Geo}{Geo}
\title{On the use of Markov chains for epidemic modeling on networks}
\author[1]{Sooyeong Kim\thanks{Contact: kswim2502@gmail.com}}
\author[2]{Jane Breen}
\author[1]{Ekaterina Dudkina}
\author[3]{Federico Poloni}
\author[1]{Emanuele Crisostomi}
\affil[1]{Department of Energy, Systems, Territory and Constructions Engineering, University of Pisa, Pisa, Italy}
\affil[2]{Faculty of Science, Ontario Tech University, Oshawa, Ontario, Canada}
\affil[3]{Department of Computer Science, University of Pisa, Pisa, Italy}
\begin{document}

\maketitle 
\begin{abstract}
We discuss various models for epidemics on networks that rely on Markov chains. Random walks on graphs are often used to predict epidemic spread and to investigate possible control actions to mitigate them. In this study, we demonstrate that they do not fully reflect the dynamics of epidemics, as they overestimate infection times. Accordingly, we explain how Markov chains may still be used to accurately model the virus spread, and to correctly predict infection times. We also provide an algorithm that efficiently estimates infection times via a sampling strategy. Finally, we present a novel indicator based on infection times, and we compare its node ranking properties with other centrality measures based on random walks.

\smallskip
\noindent \textbf{Keywords.}  Markov chains; random walks; disease spread dynamics on networks; mean first passage times; mean infection times; centrality measure
\end{abstract}

\section{Introduction}

Markov chain models of dynamic processes on graphs have been exploited in a number of successful applications, of which the most notable one is perhaps Google's PageRank algorithm \cite{Langville2006}. Other successful applications include the modeling of power grids \cite{Yan2014}, connections of neurons in neuroscience \cite{Sporns2018}, social networks \cite{Vespignani2012}, road networks \cite{Crisostomi2011}, and epidemic networks \cite{yilmaz2020kemeny}.

These models rely upon the idea that a system may be modeled as a random walk, where the system transitions from state to state in discrete time-steps (possibly itself again) with a given probability that only depends on the current node of the system. In the aforementioned examples, this could be an Internet `surfer' who chooses a link from one web-page to pass to the next web-page \cite{levene2002kemeny}, or a car that at each intersection chooses the next road segment \cite{Crisostomi2011}, or in electric circuits the incoming current at each junction is split in the out-going branches. In such examples, known parameters of graph theory and Markov chains may be used to infer interesting properties of the modeled case study, such as (for instance) mean first passage times, or the second-largest eigenvalue modulus, or Kemeny's constant. For example, in \cite{Crisostomi2011}, Kemeny's constant is used as a measure of how well-connected the road network is, and the change in Kemeny's constant upon removal of a road segment from the network is used as a measure of the importance of that road, thus using a well-understood Markov chain parameter associated with the network to make design and control decisions in the system.

The starting point of this manuscript is the observation that the Markovian modeling of the underlying system may not be strictly appropriate in some applications, and that one must be careful to interpret the properties of the random walk on a given graph carefully, rather than naively equating the behaviour of the random walk with the behaviour of the dynamical system it models. Of particular interest is the consideration of spreading phenomena in a network; for example, the spread of disease in a contact network, the dissemination of rumours in an online social network, and so on. In \cite{salathe2010dynamics}, random walks on community structures are used to understand the dynamic of the spread of disease; and independent random walks on networks are considered for understanding information dissemination or infection time in networks \cite{dimitriou2006infection, giakkoupis2019spread, lam2012information, pettarin2010infectious}. 

Consider the spread of a virus in a population represented by a contact network, assuming that the nodes of the network correspond to individuals, and edges between nodes correspond to close contact between individuals. One should note that the spread of the virus in a population does not occur according to a random walk. Indeed, one currently-infected individual may infect more than one of their contacts simultaneously, and after the disease has passed to another individual, the originally-infected person remains infected, with the possibility to spread the disease further in the next step, in tandem with other infected individuals. This particular type of simultaneity of events is a peculiarity of epidemic applications and does not generally occur in the other applications of Markov chains listed above. For example, in modelling the movement of a vehicle in a road network based on aggregate data will mean that a `random' vehicle moves from road to road in a mutually-exclusive fashion, and the sum of vehicles entering one junction is equal to the sum of vehicles leaving the junction. Similarly, in electrical circuits, the algebraic sum of currents flowing into one node is equal to the sum of currents flowing out of that node. Conversely, the mechanism of disease spread is distinct from these applications, but actually other examples may be found that exhibit similar mechanisms to those of the disease spread (e.g., the propagation of fake news in an internet network is an example that is currently drawing significant research interest).

Accordingly, in this manuscript we will examine carefully the possible pitfalls of modelling spreading phenomena via random walks, and provide examples in which relying on indicators calculated based on random walks can lead to incorrect conclusions. Such indicators can provide useful insight into the underlying topology of the network, but we will illustrate that this should not be equated with equivalent insight into the dynamics of the system. 

In Section \ref{sec:math-prelims}, we give the mathematical formulation of Markov chains and random walks on graphs, along with the associated indicators that are used for quantitative and qualitative insights into systems modelled by these. In Section \ref{sec:toy-example}, we provide several motivating examples in which we compare naive interpretations of random-walk parameters in the context of disease spread with simulated outcomes, in order to highlight some of the ways that this model does not capture the true dynamics. In Section \ref{Sec:2^n homogeneous}, we discuss an alternative Markov chain model for disease spread which \emph{does} effectively capture the disease dynamics, by extending the state space to capture the state of every individual in the network at once by representing these by binary vectors (in which a vector entry contains the information of whether the corresponding individual is infected or not). We also define and explore the \emph{mean infection times} as they relate to a network, to determine the time until the infection spreads from one person in the network to a different person in the network. In Section \ref{sec:ranking}, we propose using mean infection times as centrality measures, and discuss how they compare with other random walk indicators that have been used in the literature for control measures such as vaccination or testing.

\section{Preliminaries}\label{sec:math-prelims}

\subsection{Graph theory}
A graph $G$ is a collection of vertices $V$, indexed $v_1, v_2, \ldots, v_n$, with a set of edges $E$ consisting of pairs of vertices $\{v_i, v_j\}$. If $\{u,v\} \in E$, the vertices $u$ and $v$ are said to be \emph{adjacent}; we also write $u\sim v$, and say that $v$ is a \emph{neighbour} of $u$. The \emph{degree} of a vertex $u$, denoted by $\mathrm{deg}(u)$, is the number of neighbours of $u$. The \emph{adjacency matrix} of a graph is the matrix $A(G)=[a_{i,j}]$ such that
\[a_{i,j} = \left\{\begin{array}{cc} 1, & \text{ if } v_i \sim v_j;\\ 0, & \text{ otherwise.}\end{array}\right.\] 

Throughout this article, we consider only graphs which are simple (no loops or multiple edges between vertices), undirected (there is no orientation associated with an edge), and connected (for any pair of vertices $u$ and $v$, it is possible to reach $u$ from $v$ via a sequence of adjacent vertices). We occasionally use `network' and `graph' interchangeably, and also substitute `vertex' for `node'. The convention in the literature is that a \emph{graph} refers to the abstract mathematical object, while a \emph{network} is rooted in the real world in some way.

\subsection{Markov chains}

Suppose that we model a system as a stochastic process in which, at any given time, the system occupies one of a finite number of states $s_1, s_2, \ldots, s_n$, and transitions between these states in discrete-time steps with some fixed transition probabilities; that is, $t_{i,j}$ denotes the probability of occupying $s_j$ in the next time-step, given that the system is currently in $s_i$. This may be represented as a sequence of random variables $\{X(t) \mid t = 0, 1, 2, \ldots \}$, indexed by time-step and taking values from the state space $\{s_1, s_2, \ldots, s_n\}$. Implicit in the description above is the so-called \emph{Markov property}:
\begin{align*}
&\mathbb{P}[X(k+1) = x_{k+1} \mid X(k) = x_k, \ldots, X(1) = x_1, X(0) = x_0]\\
 = &\mathbb{P}[X(k+1) = x_{k+1} \mid X(k) = x_k].
\end{align*}
Note that $\mathbb{P}[A \mid B]$ denotes the conditional probability that event $A$ occurs, given that event $B$ occurs. The above re-states the assertion that the probability the system occupies a given state in the next time-step (at time $t=k+1$) depends only on the current state of the system (at time $t=k$), and not on the state of the system in any previous step.

The probability transition matrix $T=[t_{i,j}]$ is central in the analysis of the behaviour of the Markov chain. The $(i,j)$ entry of $T^k$ gives the probability that the system occupies the state $s_j$ of the chain after exactly $k$ time-steps, given that the initial state was $s_i$. Given an initial probability distribution vector $\mathbf{u} = \begin{bmatrix} u_1 & u_2 & \cdots & u_n\end{bmatrix}$ in which $u_i$ is the probability of occupying $s_i$ initially, the vector $u^\top T^k$ gives the probability distribution across the state space at time $k$. Under certain conditions on the transition matrix $T$ (primitivity), the Perron-Frobenius theorem indicates that as $k\to \infty$, and independently of the initial distribution $u$, $u^\top T^k$ converges to the unique stationary distribution $\bm{\pi}^\top$ of the Markov chain, which may be calculated as the unique left eigenvector of $T$ corresponding to the eigenvalue 1, normalized so that the entries sum to 1. Since $u^\top T^k$ represents the probability distribution at time $k$, this stationary distribution $\bm{\pi}$ represents the long-term probability distribution across the states; that is, $\pi_i$ represents the probability of the system occupying the state $s_i$ in the long run, or the proportion of time spent in $s_i$.

To quantify the short-term behaviour of the system modelled by a Markov chain, we consider \emph{mean first passage times}. For a Markov chain $\{X(t) \mid t = 0, 1, 2, \ldots\}$, the \emph{first passage time} from $s_i$ to $s_j$ is the random variable $F_{i,j}$ taking on the value of $t$ for which $X(t) = s_j$, given that $X(0) = s_i$ and $X(k)\neq s_j$ for all $k=0, 1,\ldots, t-1$. The mean first passage time, then, is the expected value of $F_{i,j}$, denoted $m_{i,j}$. While this definition is probabilistic in nature, it can be shown that the mean first passage times can be computed using the transition matrix $T$:
\begin{equation}\label{eq:mij} m_{i,j} = \left\{\begin{array}{cc} \be_i^\top (I-T_{(j)})^{-1}\mathbbm{1}, & i < j; \\
\be_{i-1}^\top (I-T_{(j)})^{-1}\mathbbm{1}, & i >j.\end{array}\right.\end{equation}
Here $\be_i$ denotes the $i^{\text{th}}$ standard basis vector, $\mathbbm{1}$ the all-ones vector, and $T_{(j)}$ denotes the principal submatrix of $T$ obtained by deleting the $j^{\text{th}}$ row and column. Here $m_{i,j}$ is calculated as the row sum of $(I-T_{(j)})^{-1}$ corresponding to $s_i$. We note that in this article we take the convention that $m_{i,i} = 0$; alternatively we can define the \emph{mean first return time} to $s_i$, which can be shown to be equal to $\frac{1}{\pi_i}$. The \emph{matrix of mean first passage times} is the matrix $M=[m_{i,j}]$, and while each entry can be computed as above, it is also well-known that 
\[M = (I-Z+JZ_{dg})W^{-1},\]
where $Z$ is the so-called \emph{fundamental matrix} of the Markov chain (see \cite{kemenysnell}), $Z_{dg}$ is the diagonal matrix whose entries consist of the diagonal entries of $Z$, and $W$ is the diagonal matrix whose entries consist of the entries of $\bm{\pi}$.

Given an irreducible Markov chain with transition matrix $T$, stationary vector $\bm{\pi}$, and mean first passage matrix $M = [m_{i,j}]$, one can define, for a fixed index $i$, the quantity
\[\kappa_i = \sum_{j\neq i}^n \pi_jm_{i,j}.\]
This can be interpreted as the expected time to reach a randomly-chosen state $j$, starting from a fixed state $i$. Introduced in the 1960s in \cite{kemenysnell}, this was remarkably shown to be independent of the choice of initial state $i$. As such, it is named Kemeny's constant, and is denoted as $\mathcal{K}(T)$. Noting that $\bm{\pi}^\top \mathbbm{1} = 1$, it can be shown that
\begin{equation}\label{kemeny_formula}\mathcal{K}(T) = \sum_{i=1}^n\sum_{j\neq i}^n \pi_im_{i,j}\pi_j,\end{equation}
admitting the interpretation of $\mathcal{K}(T)$ as the expected time of a random trip in the Markov chain, where the initial and terminal states of the trip are chosen at random, with respect to the stationary distribution.

A random walk on a given graph $G=(V, E)$ is an example of a Markov chain. A random walker traverses the vertices of $G$, at each step choosing an adjacent vertex to move to uniformly at random. Thus the state space consists of the vertices $v_1, \ldots, v_n$, and $t_{i,j} = \frac{1}{\deg(v_i)}$ whenever $\{i,j\}\in E$. Letting $D$ be the diagonal matrix of vertex degrees, and $A(G)$ the adjacency matrix of $G$, the probability transition matrix is given by $T = D^{-1}A(G)$. Note that the stationary distribution vector is 
\[\bm{\pi} = \tfrac{1}{2|E|}\begin{bmatrix} \deg(v_1) & \deg(v_2) & \cdots & \deg(v_n) \end{bmatrix}^\top.\] 
Kemeny's constant for the simple random walk on a graph $G$ can be interpreted as a graph invariant indicating the `connectedness' of the graph \cite{Crisostomi2011, kirkland2016kemeny}, or how fast information `mixes' in the graph \cite{hunter2006mixing, kirkland2010fastest}.

\subsection{Absorbing Markov chains}

A state $s_j$ of a Markov chain is called \emph{absorbing} if $t_{jj}=1$; thus when the chain enters state $s_j$, it remains there in every subsequent time-step. A Markov chain is called an \emph{absorbing Markov chain} if its state space contains at least one absorbing state, and if for every $s_i$ which is not absorbing, there exists some absorbing state $s_j$ and some $k>0$ such that $t_{i,j}^{(k)}>0$; that is, there is a nonzero probability that the Markov chain will reach one of the absorbing states in finite time, for any the initial state. While the analysis of irreducible Markov chains focuses on the limiting stationary distribution and mean first passage times, for absorbing Markov chains the limiting behaviour is always that the chain is eventually absorbed into one of the absorbing states, and one concerns oneself with the probabilities of absorption into the different absorbing states, and the expected time to absorption from some initial transient state.

Suppose the states are indexed so that the absorbing states are listed last in the ordering. This produces a block transition matrix as follows: \[T = \left[\begin{array}{c|c} Q & R \\\hline  O & I\end{array}\right], \] 
where $Q$ is a square matrix representing transitions between transient states, $R$ represents transitions from transient states to absorbing states, and $I$ is the identity matrix, whose order is determined by the number of absorbing states. The analysis of the behaviour of the chain before absorption occurs centres around the computation of the matrix $(I-Q)^{-1}$, the so-called \emph{fundamental matrix for absorbing chains} (see \cite{kemenysnell}). Since 
\[(I-Q)^{-1} = I + Q + Q^2 + \cdots,\]
the $(i,j)$ entry of this matrix captures the expected number of visits to the $j^{\text{th}}$ transient state before absorption, given that the chain starts in the $i^{\text{th}}$ transient state. Thus the $i^{\text{th}}$ row of $(I-Q)^{-1}R$ produces the probability distribution for the eventual absorbing state the chain ends up in, given that it starts in the $i^{\text{th}}$ transient state, and the expected time to absorption given that the chain starts in the $i^{\text{th}}$ transient state is computed as the $i^{\text{th}}$ row sum of $(I-Q)^{-1}$, or $\be_i^\top (I-Q)^{-1}\mathbbm{1}$.

\begin{remark}\label{rem:mfp_abs}
Note that the expression for mean first passage times in \eqref{eq:mij} can be derived using absorbing chain techniques, by designating the state $s_j$ as an absorbing state and replacing the $j^{\text{th}}$ row of $T$ by zeros with a 1 in the $j^{\text{th}}$ position. Then the mean first passage times can be computed as the expected time to absorption, where the matrix $Q$ is the principal submatrix $T_{(j)}$. Furthermore, if one requires the expected time to reach a collection $\mathcal{J}$ of states, this can be determined by the appropriate row sum of $(I-T_{(\mathcal{J})})^{-1}$, where by $T_{(\mathcal{J})}$ we denote the principal submatrix of $T$ with the rows and columns indexed by $\mathcal{J}$ removed.
\end{remark}

\subsection{Epidemic modelling}

In this article our focus is on exploring the ability of Markov chain models to reflect the dynamics of virus spreading in populations. For this purpose we shall only consider elementary SI epidemic models for the disease, where individuals may either be susceptible (S) or infected (I); and once a susceptible is infected, it remains infected indefinitely \cite{daley2001epidemic}. If two individuals are in contact and one is susceptible and the other infected, there is a fixed infection probability $\beta$ that the susceptible individual also becomes infected. Throughout this work, when simulations are done we assume the infection probability is $\beta=0.1$.

Compartmental models of disease spread such as SI models traditionally made the assumption that the population is \emph{well-mixed}, which means that every individual comes into contact with every other individual (making $\beta$ the rate of transmission). In a network representation of the population, each individual is represented by a node and an infected individual may infect another susceptible individual in the next time-step if the two of them are neighbours. This underlying graph is a contact network determining which individuals in the population have been in close contact that is sufficient to facilitate the spread of disease, and can be determined from data in some way (see for example \cite{Genois2018}).

Given a connected graph and at least one initially-infected individual, at some point all of the individuals in the network will be infected as well. The focus of our analysis of the above model of epidemics is on quantities dictating how fast the disease will spread. By considering quantities such as the time until the entire network is infected, or the time until a particular individual becomes infected, we can better measure and understand the role of an individual node in the dynamical process. We note that these concepts may apply naturally to other domains such as fake news spreading in online social networks, as similar epidemic models have been used in \cite{d2021stiffness, shrivastava2020defensive} regarding such applications.

\section{A comparison of random walk metrics and disease simulations} 
\label{sec:toy-example}

In this section, we motivate our work by examining several examples and comparing the dynamics of a random walk on a graph with the dynamics of a disease spreading in the same graph, obtained via simulations. We first consider the graph in Figure \ref{Simple_Graph_1} as a supporting example to clarify our discussion.

\begin{figure}[h!]
\begin{center}
	\begin{tikzpicture}[scale=1]
	\tikzset{enclosed/.style={draw, circle, inner sep=0pt, minimum size=.10cm, fill=black}}
	
	\node[enclosed, label={above, yshift=0cm: $1$}] (v_1) at (-2,0.5) {};
	\node[enclosed, label={below, yshift=0cm: $2$}] (v_2) at (-2,-0.5) {};
	\node[enclosed, label={above, yshift=0cm: $3$}] (v_3) at (-1,0.5) {};
	\node[enclosed, label={below, yshift=0cm: $4$}] (v_4) at (-1,-0.5) {};
	
	\draw (v_1) -- (v_2);
	\draw (v_2) -- (v_3);
	\draw (v_3) -- (v_1);
	\draw (v_1) -- (v_4);
	\end{tikzpicture}
\end{center}
	\caption{Simple graph with 4 nodes.}
	\label{Simple_Graph_1}	
\end{figure}
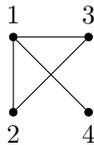
The probability transition matrix for the simple random walk on this graph is
\begin{align*}
T_{RW} = \begin{bmatrix}
0 & 1/3 & 1/3 & 1/3\\
1/2 & 0 & 1/2 & 0\\
1/2 & 1/2 & 0 & 0\\
1 & 0 & 0 &0
\end{bmatrix}.
\end{align*}

In comparing this random walk with the movement of disease, there are some issues to note. If the $(i,j)$ entry of $T_{RW}$ is naively interpreted as the probability of the disease spreading from individual $i$ to individual $j$, then immediately we see that a contact of the same duration between individuals $1$ and $4$ produces two different probabilities $t_{14}$ and $t_{41}$, due to the different degrees of the two nodes.

To overcome the problems of having contact of the same durations associated with different probabilities of spreading the virus, we can consider modifying the transition matrix to create a symmetric one, which we denote $T_{sym}$, adapting the example of \cite{dudkina2021node}. The following definition can be used to account for the infection probability and constructs a symmetric probability transition matrix, so that the probability of $i$ to infect $j$ is the same as that of $j$ to infect $i$. In fact, the matrix is defined so that every off-diagonal entry is equal. This matrix is necessarily doubly-stochastic (both the rows and columns sum to 1). We define
\begin{align*}
	T_{sym}=\alpha A(G)+\left(I - \alpha D\right),
\end{align*}
for some $0<\alpha\leq \frac{1}{d_{\max}}$ where $d_{\max}$ is the maximum node degree, and $D$ is the diagonal matrix of vertex degrees. Note that $\alpha A(G)$ is a substochastic matrix with zero diagonal, while $I-\alpha D$ is a diagonal matrix. One can interpret this definition as a transition matrix for a random walk on a graph where, at each vertex, the random walker has some fixed constant probability $\alpha$ of choosing any of the neighbours of the current vertex, but with some residual probability can choose instead to remain in place. 
For the graph in Figure \ref{Simple_Graph_1}, choosing $\alpha=0.1$, we have
\begin{align*}
T_{sym} = \begin{bmatrix}
0.7 & 0.1 & 0.1 & 0.1\\
0.1 & 0.8 & 0.1 & 0\\
0.1 & 0.1 & 0.8 & 0\\
0.1 & 0 & 0 & 0.9
\end{bmatrix}.
\end{align*}
We note that in this example, every off-diagonal entry is equal to the infection probability $\beta=0.1$ that we wish to use in the disease model. In the case that the desired infection probability is larger than $\frac{1}{d_{\max}}$, it is not possible to choose $t_{ij} = \beta$ for transition probabilities between individuals joined by an edge. However, we remark that mean first passage times between distinct states for transition matrices of this form scale in a natural way with varying choices of $\alpha$. That is, if $T_1 = \alpha_1 A(G) + (I-\alpha_1 D)$ and $T_2 = \alpha_2 A(G) + (I-\alpha_2 D)$, then for $i\neq j$, $m_{i,j}^{(T_1)} = \frac{\alpha_2}{\alpha_1} m_{i,j}^{(T_2)}$.

We now intend to evaluate the accuracy of the models based on random walks. For this purpose, we compute the mean first passage times $m_{i,j}$ for a random walk on the graph. To compare, we also extensively simulate the spreading of the virus, and compute from simulations the average time $M_{i,j}$ for individual $j$ to be infected, given that we start with only individual $i$ infected. 
These are the values which are compared in order to examine the efficacy of random-walk-based methods. 

Let $M_{RW}$ and $M_{sym}$ be the matrices of MFPTs for the random walk and the symmetric random walk on the graph in \Cref{Simple_Graph_1}, respectively. Then, $M_{RW}$ and $M_{sym}$ can be computed in the usual way to produce:
\begin{align*}
M_{RW} = \begin{bmatrix}
  0 & 3.3333 & 3.3333 &7\\
2 & 0 & 2.6667 &9\\
2 & 2.6667 & 0 &9\\
1 & 4.3333 & 4.3333 &0
\end{bmatrix},\;\; M_{sym}=\begin{bmatrix}
  0 & 16.6667 & 16.6667 &30\\
10 & 0 & 13.3333 &40\\
10 & 13.3333 & 0 &40\\
10 & 26.6667 & 26.6667 &0
\end{bmatrix}.
\end{align*}

On the other hand, we can estimate the time for an initially-infected individual $i$ to cause individual $j$ to become infected using simulations. Assuming that the probability of spreading the virus is equal to $0.1$, 
we ran $100000$ Monte Carlo simulations with Algorithm~\ref{algo:MCS}.
\begin{algorithm}
	\KwData{A graph $G = (V, E)$; infection probability $\beta$; a number of iterations $N$.}
	$M_{inf} \gets 0$\; 
	\For{$k = 1,2,\dots,N$}{
		\For{ $i \in V$}{
			$\bx \gets e_i$; $t\gets 0$\;
			\While{$\bx \neq \mathbbm{1}$}{
			$t\gets t+1$\;
			\For {$\{v_l,v_j\}\in E$ with $x_l=1$ and $x_j=0$}{
			Assign $1$ to $x_j$ with probability $\beta$\;
			\If{$x_j =1$}{
				$M^{(k)}_{i,j}\gets t$\;}}
		}}
		$M_{inf} \gets M_{inf} + M^{(k)}$\;
	}
	$M_{inf} \gets M_{inf} / N$;
	\caption{Monte Carlo estimation of $M_{inf}=[M_{i,j}]$} \label{algo:MCS}
\end{algorithm}
We estimated a matrix $M_{inf}$ of these expected times regarding infection as follows:

\begin{align}\label{matrix:M1}
M_{inf}\approx\begin{bmatrix}
 0 & 7.7922 & 7.7766 &10.0051\\
7.7503 & 0 & 7.7358 &17.7442\\
7.7383 & 7.7284 & 0 &17.7707\\
10.0059 & 17.7454 & 17.7628 &0\\
\end{bmatrix}.
\end{align}

Apart from scaling factors (one could normalize $M_{RW}$ or $M_{sym}$ for comparison purposes), it is still obvious that there is a disparity between the estimated values and the theoretical mean first passage times associated with the simple or symmetric random walks. Different predictions are obtained in general; as a single example, the ratio of $m_{1,3}$ to $m_{1,4}$ in $M_{RW}$ and the ratio of the corresponding entries in $M_{sym}$ are both close to $2$, while the ratio of the corresponding entries in $M_{inf}$ is close to $1$.

We consider another example which arises in the context of traffic flow in urban road networks, another domain of application in which random walks have been shown to be an effective model. Let $L(m,n)$ denote an $m\times n$ lattice graph or grid graph; see Figure~\ref{fig:L(3,4)}. We now consider a random walk on this lattice, under the assumption that the next state is chosen among the available neighbours with the same probability $p = 0.1$, while with the residual probability the state will not change (i.e., the random walker stays in place). For instance, this may correspond to a vehicle travelling in a lattice-like road network, where edges correspond to roads and nodes to intersections. At each intersection, the car will choose the next road with the same probability 0.1, and with the residual probability will remain in the same place. For example, for nodes in the middle of the lattice $L(3, 4)$ in Figure~\ref{fig:L(3,4)} which have 4 neighbours, this corresponds to saying that with probability 0.6 the state will not change in one step, and with probability 0.4 it will change state, with the same probability 0.1 of choosing any one of the four neighbours. In this case, it was shown in \cite{Crisostomi2011} that the mean first passage times (MFPTs) to travel from any state to any other state according to such a random walk, can be computed by using the transition matrix $T$ given by $T=0.1A(G)+(I-0.1D)$ where $G$ is $L(3,4)$. The values of such MFPTs are depicted on the left of Figure \ref{Example_MFPTs}.

\begin{figure}[h!]
	\begin{center}
		\includegraphics[width=0.5\textwidth]{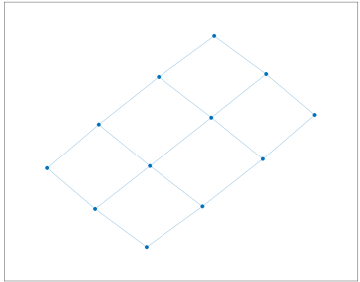}
		\caption{The graph $L(3,4)$}\label{fig:L(3,4)}
	\end{center}
\end{figure}

We now consider the same graph, but we assume now that the graph depicts the social interactions between individuals, and we compute how long it would take for a virus to spread from individual $i$ to individual $j$ in reality. We assume again that the probability of spreading the virus in one step is $\beta = 0.1$, but in this context, one individual can infect more than one neighbour in a single step. Accordingly, the mean time to infection from one individual to any other individual are shown on the right of Figure \ref{Example_MFPTs}. Mean infection times (MITs) from $i$ to $j$ are computed using Algorithm \ref{algo:MCS} under the assumption the $i$ is the only infected individual at the first time-step, and it is straightforward to see that such times are much smaller than those that would have been estimated through a conventional random walk model.

\begin{figure}[h!]
	\begin{center}
		\includegraphics[width=1\textwidth]{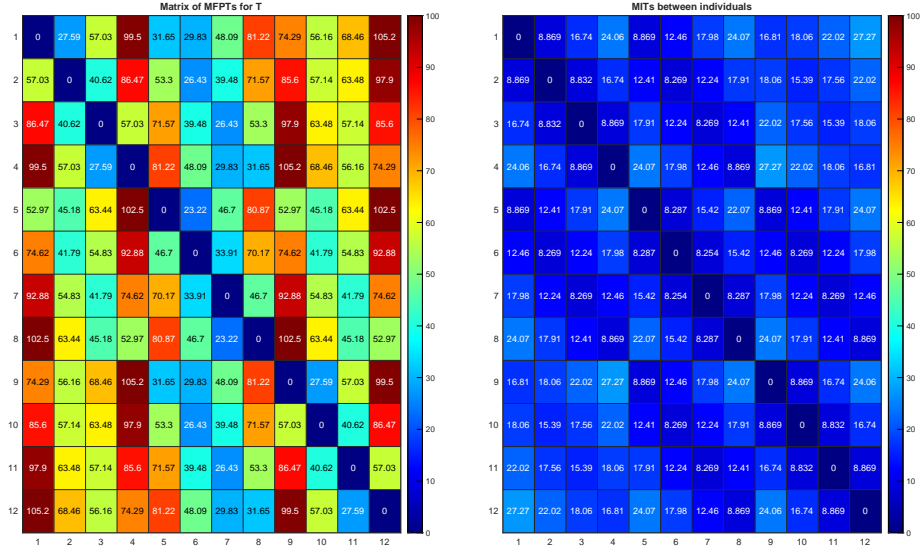}
		\caption{Comparison of mean first passage times for $T$ and mean infection times in the graph $L(3,4)$.}\label{Example_MFPTs}
	\end{center}
\end{figure}

Another example that displays clearly how MFPTs and MITs are qualitatively different is the following. Consider a complete graph on $k$ nodes, and to one of these nodes (which we shall call $A$) attach a new node (which we shall call $B$). An example with $k=5$ is depicted in Figure~\ref{fig:clique}.
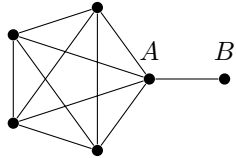
\begin{figure}[h!]
	\centering
	\begin{tikzpicture}
	\node[circle,fill,inner sep=1.5pt] (A) at (0:1) {};
	\node[circle,fill,inner sep=1.5pt] (C) at (72:1) {};
	\node[circle,fill,inner sep=1.5pt] (D) at (144:1) {};
	\node[circle,fill,inner sep=1.5pt] (E) at (216:1) {};
	\node[circle,fill,inner sep=1.5pt] (F) at (288:1) {};
	
	\draw (A) -- (C) -- (D) -- (E) -- (F) -- (A);
	\draw (A) -- (D) -- (F) -- (C) -- (E) -- (A);
	
	\node[circle,fill,inner sep=1.5pt] (B) at (0:2) {};
	
	\draw (A) -- (B);			
	
	\node[above = 1.5 pt of A] {$A$};
	\node[above = 1.5 pt of B] {$B$};
	
	\end{tikzpicture}
	\caption{A graph composed of a clique of $k=5$ nodes attached to another node.}\label{fig:clique}
\end{figure}

Suppose that $A$ is the only initially infected individual. Then, at each time-step $B$ has a probability $\beta$ of becoming infected, hence the MIT from $A$ to $B$ is the expected value of a geometric distribution with parameter $\beta$, that is, $\mathbb{E}[\Geo(\beta)] = \frac{1}{\beta}$. This probability is independent of $k$. On the other hand, the mean time needed for a random walk starting from $A$ to reach $B$ increases as $k$ grows: indeed, for large $k$, the random walk on the clique takes more time to return to $A$, since there are more nodes available to visit. More precisely, one can compute the MFPT from $A$ to $B$ to be $k^2-k+1$.\footnote{
	Let $M$ be the mean first passage time from $A$ to $B$. Starting from $A$, the random walk either reaches $B$ immediately in time $1$, with probability $\frac{1}{k}$, or moves to another node of the clique with probability $\frac{k-1}{k}$. Then, at each subsequent time-step, it has probability $\frac{1}{k-1}$ to return to $A$, hence the time of first return to $A$ from another node of the clique is $\mathbb{E}[\Geo(\frac{1}{k-1})] = k-1$. After the process has returned to $A$, by the Markov property we need again an average time $M$ to reach $B$. This argument produces the equation $M = \frac{1}{k} 1 + \frac{k-1}{k}(1 + (k-1) + M)$, which can be solved for $M$.
}

\subsection{Discussion}

Even in a very simple example, it is obvious to see that modelling the spreading of the virus in a population as a random walk leads to some coarse approximations in terms of the expected mean first passage times. This is rather important to note, as many researchers have used indicators based on random walks, and MFPTs explicitly, in a number of epidemic applications (e.g., vaccination or testing strategies). For instance, this comment holds for indicators such as Kemeny's constant \cite{yilmaz2020kemeny}, or the random walk centrality (RWC) \cite{noh2004random}. Also, random walk betweenness (RWB), originally introduced by Newman \cite{newman2005measure} is a classic indicator based on random walks that is known to be one of the best indicators for which individuals should be vaccinated in a population \cite{salathe2010dynamics}.

All such indicators may still provide interesting insights in the dynamics of a virus based on the underlying structure of the contact network---for example, in \cite{yilmaz2020kemeny} it is shown that Kemeny's constant can be used to detect bridges between disparate communities in contact networks. The effectiveness of these indicators lies in the fact that they are obviously based on the topological structure of a population and of its average contacts; still, one may wonder about the impact of decisions that could be made on more precise models of the disease dynamics, rather than simply the network structure which is revealed through random walk methods. While we do not come to a conclusive answer to that question in this article, we provide the tools by which one can appropriately model and measure the disease spread using a different Markov chain model, and we explore some examples for which random walk indicators disagree with those determined from the more intuitive model.

\section{Alternative Markov chain models and mean infection times}
\label{Sec:2^n homogeneous}

We suggest that the primary reason that a random walk seems to be ineffective as a model for disease spread in a network is that previously-infected individuals remain infectious throughout the process, continuing to infect their contacts and affecting the overall dynamics of the disease in future time-steps. As such, we require a model which retains information about every individual's status at once.  In this section, we describe such a model which is known in the literature, and which has been used in a variety of settings and in many forms. We will review some of these results, and our contribution to the literature is to then define and investigate an analogue of mean first passage times in this setting. We determine how to compute these exactly (the values which were simulated in Section \ref{sec:toy-example}), and discuss some computational issues. We then proceed in Section~\ref{sec:ranking} to establish how these quantities can be developed to determine a centrality indicator for use in control measures in disease spread settings.

\subsection{$(0,1)$ model of epidemic dynamics}
Let $G=(V, E)$ be a graph, and suppose its vertices are labelled $v_1, v_2, \ldots, v_n$. Let $\mathbf{X}(t)=(X_1(t),\dots,X_n(t))\in\{0,1\}^n$ where $X_i(t)=1$ if node $v_i$ is infected at time $t$, and $X_i(t)=0$, otherwise. Since we consider the SI model, we assume that the probability of recovery is $0$; hence if $X_i(t_0) = 1$, then $X_i(t) = 1$ for all $t> t_0$ (models with nonzero recovery probabilities are used in \cite{ahn2014mixing, ganesh2005effect, van2008virus}). Assume that in any contact between a susceptible individual and an infected individual, the probability of infection is a constant $\beta$. Then, $\{\mathbf{X}(t)\mid t=0,1,\dots\}$ is a discrete-time, time-homogeneous Markov chain with a finite state space $S=\{0,1\}^n$; that is, each state is represented by a binary vector in which the $k$th entry is 1 if $v_k$ is infected, and 0 if $v_k$ is suspectible. 

For a subset $A\subset V$, we denote by $s_A$ the vector in $S$ whose $k^\text{th}$ component is $1$ if $v_k\in A$, and zero if $v_k\notin A$. For simplicity, we write $s_{\{v_i\}}$ as $s_i$. In what follows, we assume that $\mathbb{P}[\mathbf{X}(t+1)=s_i\mid\mathbf{X}(t)=(0,\dots,0)]=\frac{1}{n}$ for $1\leq i\leq n$, so that if initially no one is infected, an individual is chosen uniformly at random to be infected in the next step. Then, $\{\mathbf{X}(t)\}$ is an absorbing Markov chain with exactly one absorbing state $s_{V}=(1,\dots,1)$.

Observe that if node $v_i$ has $n_i$ infected neighbours at time $t$, the probability it is not infected at time $t+1$ is $(1-\beta)^{n_i}$. Then, for $\bx,\by\in S$, the transition probability from $\bx$ to $\by$ is given by
\begin{align} \label{transprob}
\mathbb{P}[\mathbf{X}(t+1)=\by \mid \mathbf{X}(t)=\bx]=\prod_{i=1}^{n}\mathbb{P}[X_i(t+1)=y_i\mid \mathbf{X}(t)=\bx]
\end{align}
where
\begin{align} \label{transprob2}
\mathbb{P}[X_i(t+1)=y_i\mid \mathbf{X}(t)=\bx]=&\begin{cases*}
1, & \text{if $y_i=x_i=1$;}\\
1-(1-\beta)^{n_i}, & \text{if $y_i=1, x_i=0$;}\\
(1-\beta)^{n_i}, & \text{if $y_i=0, x_i=0$;}\\
0, & \text{if $y_i=0, x_i=1$.}
\end{cases*}
\end{align}
We note that if $\by$ does not have at least the same infected population as $\bx$, then $P[\mathbf{X}(t+1)=\by\mid \mathbf{X}(t)=\bx]=0$. This means that the states can be ordered in such a way that the probability transition matrix $T$ is an upper triangular matrix.

We now provide a method to obtain the expected time for a group of some infected individuals to infect---directly or indirectly---another group of susceptible individuals; that is, given a subset of individuals $A$ who are initially infected, we compute the expected time until a subset $B$ of individuals are infected, where $B \supset A$. If we choose $A= \{v_i\}$ and $B = \{v_i, v_j\}$, this enables us to exactly compute the matrix $M_{inf}$ in \eqref{matrix:M1}. We define terminology for such expected times in order to distinguish from the mean first passage times used in the previous section.
 
\begin{definition}
	Let $G$ be a connected graph, and let $A,B\subseteq V$ with $A\subset B$. The \textit{mean infection time (MIT)} from $A$ to $B$, denoted as $\mu_{A, B}$, is the expected time for all nodes in $B\setminus A$ to get infected, given that all nodes in $A$ are initially infected. As an analogous concept to the mean first passage matrix, we define the \textit{MIT matrix} to be the matrix $M_{inf}=[M_{i,j}]$ where $M_{i,j}=\mu_{\{v_i\},\{v_i,v_j\}}$. 
\end{definition}

We shall consider the mean infection time from $A$ to $B$ where $A,B\subseteq V$ with $A\subset B$. Let us consider the subset $S_B$ of $S$ that consists of binary vectors with ones in the positions corresponding to $B$; that is, states in which at least the vertices corresponding to $B$ are infected, and possibly others infected too. Then, the mean infection time from $A$ to $B$ is the expected value of the first time (hitting time) at which the system arrives at one of states in $S_B$, starting from the state $s_A$. As discussed in Remark~\ref{rem:mfp_abs}, we have
\begin{align}\label{MIT1}
	\mu_{A, B}=\be_{s_A}^\top(I-T_{(S_B)})^{-1}\mathbbm{1}
\end{align}
where $\be_{s_A}$ is the column vector whose component corresponding to the state $s_A$ is $1$ and zeros elsewhere; $T_{(S_B)}$ is the principal submatrix of $T$ obtained by removing rows and columns corresponding to states in $S_B$; and $\mathbbm{1}$ is the all-ones vector. 

\begin{remark}
The Markov chain above with $2^n$ binary states has been studied before in several contexts; it has been used before to model disease spread, and additional assumptions such as nonzero recovery probability and continuous time have been studied. In most disease spread contexts, the goal is to understand global features such as the conditions under which the process terminates \cite{ahn2014mixing,pare2018analysis,pettarin2010infectious,van2008virus}. To the best of our knowledge, these models have not been used to examine the role of an individual node in a network for disease spread. That is, the expected time to absorption from the state in which a single individual in the network is infected (which indicates individuals' capabilities to spread virus) has not been studied in detail before now using this type of model. Determining the influence of an individual in a contact network is well-studied, particularly in the context of control via testing or vaccination strategies (see \cite{salathe2010dynamics, yilmaz2020kemeny}). Furthermore, the mean infection times are a newly-introduced concept.
\end{remark}

Due to the increased size of the state space (from $n$ states to $2^n$ states), the calculation of these mean infection times becomes computationally expensive. In the remainder of this section, we discuss some computing strategies.

We first note that we need not consider the entire matrix $T_{(S_B)}$ for finding $\mu_{A, B}$. In an SI epidemic model, for any state $\bx$ such that $x_k=0$ for some $k\in A$, the transition probability from $s_A$ to $\bx$ is $0$. That is, if $S_A$ is the set of states consisting of binary vectors with ones in the positions corresponding to $A$, and $\overline{S}_A = S\setminus S_A$ is its set complement, then if the initial state is $s_A$ we may remove from consideration any states of the Markov chain from $\overline{S}_A$ without affecting the computation. These correspond to states in which some individual who was initially infected is now not infected, which is impossible in the SI model. Hence, instead of $T_{(S_B)}$ in \eqref{MIT1}, we may use the matrix $T_{(\overline{S}_A\cup S_B)}$ obtained from $T$ by deleting rows and columns corresponding to the states in $\overline{S}_A\cup S_B$. This substochastic matrix has rows and columns indexed by the states in $S_A$.
Therefore,
\begin{align}\label{MIT}
\mu_{A, B} =\be_{s_A}^T(I-T_{(\overline{S}_A\cup S_B)})^{-1}\mathbbm{1}.
\end{align}
Note that $\be_{s_A}$, $I$ and $\mathbbm{1}$ are re-sized appropriately (their order determined in context), and that $\be_{s_A}$ is the standard basis vector with a 1 in the position corresponding to the state $s_A$, according to where it appears in the reduced state list.

Now let us consider a mean infection time $\mu_{\{v_i\}, V}$. The size of the matrix $T$ is exponential, so it is impractical to store data for $T$ when the number of nodes in the corresponding graph is sufficiently large. Note that $T$ is an upper triangular matrix, and computational experiments show that the sparsity of $T$ is between $1\%$ and $5\%$ for randomly-chosen connected graphs of order $10$. Thus one could consider using a Sparse Triangular Matrix Solver (SpTrSV) (see \cite{ding2020leveraging} for a brief introduction and algorithms). In order to implement SpTrSV, it requires particular data formats \cite{bulucc2009parallel} such as the compressed sparse row (CSR) or the compressed sparse column (CSC), which are two formats for storing nonzero entries of a sparse matrix into three row vectors. However, as the size of the matrix increases exponentially, the number of nonzero entries would increase exponentially since the graphs we consider are connected. Hence, SpTrSV would eventually become infeasible for the computation of mean infection times as in \eqref{Formula:MFPT}.
	
In order to see how using the transition matrix $T$ might not be practical, we consider a particular example of a tree, which is a minimally-connected graph (that is, it has the least number of edges necessary to be a connected graph). Let $G$ be a star with vertex set $\{1,\dots,n\}$---that is, one vertex is of degree $n-1$ and the others have degree $1$. Suppose that vertex $1$ is of degree $1$; and vertex $2$ is of degree $n-1$. Assume that $X(0)=(1,0,\dots,0)$ and consider the $2^{n-1}\times 2^{n-1}$ submatrix of the transition matrix. It can be seen that there are $2^{n-2}+1$ nonzero entries on the off-diagonal. In empirical settings, the storage of data like this becomes an issue. Even though $G$ is one of the sparsest graphs, we may not be able to calculate explicitly our desired mean first passage times for sufficiently large orders of $G$ by using \eqref{Formula:MFPT}.

\begin{example}
Consider the graph $G$ in Figure \ref{Simple_Graph_1}. Let $V=\{v_1,v_2,v_3,v_4\}$, $A=\{v_1\}$, and $B=\{v_1,v_2\}$. For simplicity, we shall remove parentheses and commas of all states in $S$. Then, $s_A=1000,\; S_A = \{1000, 1100, 1010, 1001, 1110, 1101, 1011, 1111\},$ and $S_B=\{1100,1110,1101,1111\}$. One can verify that we obtain the matrix
\begin{align*}
	T_{(\overline{S}_A\cup S_B)}=\begin{bmatrix}
	0.729 & 0.081 & 0.081 & 0.009\\
	0 & 0.729 & 0 & 0.081 \\
	0 & 0 & 0.81 & 0.09\\
	0 & 0 & 0 & 0.81
	\end{bmatrix}.
\end{align*}
This substochastic matrix represents transitions between all possible states in which $v_1$ is infected, but $v_2$ is not yet infected. Rows and columns correspond to the states $1000,1010,1001,1011$, in that order. Note that these states correspond to the set $S_A\backslash S_B$.
From \eqref{MIT}, we have $\mu_{A, B} = 7.7562$ (rounded to four decimal places), which is the expected time for susceptible individual $2$ to get infected, provided the only infected node is $1$. So, $\mu_{A, B}$ corresponds to the $(1,2)$ entry of $M_{inf}$ in \eqref{matrix:M1}. In this manner, one can calculate the matrix $M_{inf}$ (rounded up to $4$ decimal places) as follows: 

\begin{align}\label{actual MIT}
M_{inf}=\begin{bmatrix}
0 & 7.7562 & 7.7562 & 10\\
7.7562 & 0 & 7.7562 & 17.7562\\
7.7562 & 7.7562 & 0 &  17.7562\\
10 & 17.7562 & 17.7562 & 0
\end{bmatrix}.
\end{align}

\end{example}

\subsection{A sampling strategy for the $(0,1)$ model}
In this subsection, we give an alternative construction of the $(0,1)$ model, which allows us to use a sampling method for more efficient approximation of the mean infection times, and can be used to prove a nontrivial symmetry property. This construction is a discrete-time analogue of the one appearing in~\cite[Theorem~II.2]{goeringetal}, which the authors describe as `folklore in some circles'.

Let us consider a single edge $\{v_i, v_j\} \in E$ in the network. Assuming that one of its two endpoints is infected and the other is not, the time $\tau_{i,j}$ taken for the infection to spread across the edge follows a geometric distribution, i.e., $\mathbb{P}[\tau_{i,j} = k] = (1-\beta)^{k-1}\beta$ for each $k=1,2,\dots$. We call $\tau_{i,j}$ the \emph{potential infection time} over the edge $i,j$, because we are working under the assumption that the infection spreads over this edge of the network from one of its endpoints to the other.

In this new alternative construction, we first assign independently a value $\tau_{i,j}$ to each edge $\{v_i, v_j\} \in E$, sampling according to the geometric distribution described above. Once these values $\tau_{i,j}$ have been chosen, we can reconstruct the dynamics of the infection starting from an initial infected set $\hat{\mathbf{X}}(0) = s_A$ for some $A \subseteq V$: an infection happens across edge $\{v_i, v_j\}$ after time $\tau_{i,j}$ if one of the two endpoints is infected and the other is not. Hence the infection spreads from a vertex $v_k$ to another vertex $v_{\ell}$ in time $t$ equal to the graph distance $\operatorname{dist}(v_k,v_\ell)$ on the graph $G$, with weights (edge lengths) given by $\tau_{i,j}$.

The Markov process that describes the infected individuals at time $t$ in this construction is thus $\hat{\mathbf{X}}(t) = s_{\{v \in V \colon \operatorname{dist}(A,v) \leq t \}}$, where $\operatorname{dist}(A,v) = \min_{w\in A} \operatorname{dist}(w,v)$. An example of possible sample values of the potential infection times $\tau_{i,j}$ and the resulting dynamic of the process $\hat{\mathbf{X}}(t)$ is depicted in Figure~\ref{fig:construction-example}.

\begin{figure}
\begin{center}
	\begin{tikzpicture}[scale=1, baseline]
	\tikzset{enclosed/.style={draw, circle, inner sep=0pt, minimum size=.10cm, fill=black}}
	
	\node[enclosed, label={above, yshift=0cm: $1$}] (v_1) at (-2,0.5) {};
	\node[enclosed, label={below, yshift=0cm: $2$}] (v_2) at (-2,-0.5) {};
	\node[enclosed, label={above, yshift=0cm: $3$}] (v_3) at (-1,0.5) {};
	\node[enclosed, label={below, yshift=0cm: $4$}] (v_4) at (-1,-0.5) {};
	
	\draw (v_1) -- (v_2) node[midway,left,blue] {$3$};
	\draw (v_2) -- (v_3) node[pos=0.1,right,blue] {$7$};
	\draw (v_3) -- (v_1) node[midway,above,blue] {$3$};
	\draw (v_1) -- (v_4) node[pos=0.7,right,blue] {$1$};
	\end{tikzpicture}
	\quad \quad
	\begin{tabular}{ccc}
	\toprule
	\textbf{Time} $t$ & \textbf{Infected set} & $\hat{\mathbf{X}}(t)$\\
	\midrule
	0 & \{2\} & (0,1,0,0)\\
	1 & \{2\} & (0,1,0,0)\\
	2 & \{2\} & (0,1,0,0)\\
	3 & \{1,2\} & (1,1,0,0)\\
	4 & \{1,2,4\} & (1,1,0,1)\\
	5 & \{1,2,4\} & (1,1,0,1)\\
	6 & \{1,2,3,4\} & (1,1,1,1)\\
	7 & \{1,2,3,4\} & (1,1,1,1)\\
	8 & \{1,2,3,4\} & (1,1,1,1)\\
	\bottomrule
	\end{tabular}
\end{center}
\caption{A random choice of the potential infection times $\tau_{i,j}$ in our example graph, displayed in blue, and the resulting infection dynamic starting from $\hat{\mathbf{X}}(0) = s_2$. Node 1 gets infected at time 3, node 4 gets infected at time $3+1=4$, and node 3 gets infected at time $3+3=6$ via node 1. The infection does not spread across edge $\{2,3\}$,  since both endpoints are already infected at time $7$ when the potential infection is due to happen.} \label{fig:construction-example}
\end{figure}

The following proposition shows that this alternative construction provides the same result. This result could be considered intuitive by some, but we provide a formal proof.
\begin{theorem} \label{thm:equidistribution}
The stochastic processes $\mathbf{X}(t)$ and $\hat{\mathbf{X}}(t)$ (with the same initial state $\mathbf{X}(0) = \hat{\mathbf{X}}(0) = s_A$) are equidistributed.
\end{theorem}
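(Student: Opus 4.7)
My plan is to show that $\hat{\mathbf{X}}(t)$ is itself a time-homogeneous Markov chain whose one-step transition kernel agrees with that of $\mathbf{X}(t)$ as given by \eqref{transprob}--\eqref{transprob2}. Since both chains start from $s_A$, equidistribution of the joint laws then follows by induction on $t$. The whole argument rests on the memoryless property of the geometric distribution.

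Fix $t \geq 0$ and condition on the full history $\hat{\mathbf{X}}(0)=s_A,\hat{\mathbf{X}}(1),\ldots,\hat{\mathbf{X}}(t)=s_{A_t}$, which amounts to observing the set $A_t$ together with the infection times $t_v=\operatorname{dist}(A,v)\leq t$ for each $v\in A_t$. I would partition the edges of $G$ into three groups: (i) both endpoints in $A_t$, (ii) exactly one endpoint in $A_t$, (iii) both endpoints outside $A_t$. The central observation is that, by the shortest-path definition of $\hat{\mathbf{X}}$, every shortest path from $A$ to a vertex $v\in A_t$ uses only vertices of $A_t$: any intermediate vertex $w$ on such a path satisfies $\operatorname{dist}(A,w)\leq\operatorname{dist}(A,v)\leq t$. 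Consequently the observed values $(t_v)_{v\in A_t}$ are a function solely of the group (i) weights. At the same time, using the recursion $\operatorname{dist}(A,v_j)=\min_{v_k\sim v_j}(\operatorname{dist}(A,v_k)+\tau_{k,j})$, the event that a vertex $v_j\notin A_t$ is still susceptible at time $t$ translates into the collection of per-edge lower-bound constraints $\tau_{k,j}>t-t_k$ for each group (ii) edge $\{v_k,v_j\}$ with $v_k\in A_t$, and imposes no constraint on group (iii) weights.

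Since the $\tau_e$'s are a priori independent and the conditioning event factors as (a joint condition on group (i) weights) $\cap$ (individual lower-bound conditions on group (ii) weights) $\cap$ (no condition on group (iii) weights), the posterior joint distribution of the group (ii) weights is again a product, with each $\tau_{k,j}$ distributed as $\operatorname{Geom}(\beta)$ conditioned on $\tau_{k,j}>t-t_k$. Memorylessness then gives $\mathbb{P}[\tau_{k,j}=t+1-t_k \mid \tau_{k,j}>t-t_k]=\beta$. For a susceptible $v_j\notin A_t$ with $n_j$ infected neighbours, the probability that none of the associated group (ii) weights equals $t+1-t_k$ is $(1-\beta)^{n_j}$, matching \eqref{transprob2} exactly. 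Because distinct susceptible vertices receive infection through disjoint sets of group (ii) edges (each such edge has a unique susceptible endpoint), the events ``$v_j$ gets infected at time $t+1$'' are conditionally independent across $j$, reproducing the product form \eqref{transprob}. Independence of the transition from the pre-$t$ history simultaneously yields the Markov property.

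The main obstacle is the factorization claim of the previous paragraph: one must confirm that conditioning on the full history reveals no information about group (ii) weights beyond the explicit lower bounds $\tau_{k,j}>t-t_k$. This is precisely what the shortest-path observation secures, by ensuring that the complicated joint constraint encoded in $(t_v)_{v\in A_t}$ is supported entirely on group (i) variables and is therefore independent of the group (ii) and (iii) variables. Once this is in hand, the rest of the calculation is a direct application of the memoryless property of $\operatorname{Geom}(\beta)$.
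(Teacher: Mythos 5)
Your proof is correct and follows essentially the same route as the paper's: both reduce the one-step transition of $\hat{\mathbf{X}}(t)$ to the memoryless property of the geometric edge weights $\tau_{i,j}$ and to their mutual independence across the edges joining each susceptible vertex to its infected neighbours. Your edge-partition argument, which shows that conditioning on the full history imposes only the per-edge lower bounds $\tau_{k,j} > t - t_k$ on the boundary edges (and nothing on the unexplored ones), makes explicit a factorization step that the paper's proof leaves implicit when it asserts that the transition probabilities are independent of the previous history.
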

\begin{proof}
We shall show that the transition probabilities from $\hat{\mathbf{X}}(t)$ to $\hat{\mathbf{X}}(t+1)$ coincide with those in~\eqref{transprob} and are independent of previous history. Let us condition on the state $\hat{\mathbf{X}}(t) = \mathbf{x}$, and consider the probability that $\hat{\mathbf{X}}(t+1) = \mathbf{y}$ has its $j$\textsuperscript{th} component equal to $0$, i.e., that $\operatorname{dist}(A, v_j) > t+1$. If $x_j=1$, then $\operatorname{dist}(A, v_j) \leq t$, hence it is impossible (probability $0$) that this distance is larger than $t+1$. Otherwise, let us call $n_j$ the number of edges that join $v_j$ with another node $v_i$ with $x_i=1$. For $v_j$ to stay un-infected, it must be the case that $\operatorname{dist}(A,v_i) + \tau_{i,j} > t+1$ for each such edge; and since we are assuming that $x_j=0$ it must already be the case that $\operatorname{dist}(A,v_i) + \tau_{i,j} > t$. By the memoryless property of the geometric distribution,
\[
\mathbb{P}[\tau_{i,j} > t-\operatorname{dist}(A,v_i)+1 \mid \tau_{i,j} > t-\operatorname{dist}(A,v_i)] = 1-\beta,
\]
and $\tau_{i,j}$ is independent from the potential infection times of all other edges by construction. Hence
\[
\mathbb{P}[\hat{X}_j(t+1)=0 \mid \hat{X}_i(t) = 0] = (1-\beta)^{n_j}.
\]
Hence we have proved the last two cases in~\eqref{transprob2}; the first two cases follow by difference. The product formula~\eqref{transprob} follows from the fact that each factor depends on different potential infection times $\tau_{i,j}$, which are independent.
\end{proof}

The alternate construction leads immediately to an algorithm for more efficient Monte Carlo simulation of the system, which we describe in Algorithm~\ref{algo:efficientsim}.
\begin{algorithm}
\KwData{A graph $G = (V, E)$; infection probability $\beta$; a number of samples $N$.}
\KwResult{Estimates of the MIT matrix $M_{inf}\in\mathbb{R}^{n\times n}$, and of the MIT vector $\mathbf{m} = [m_1,m_2,\dots,m_n]\in\mathbb{R}^n$.}
$M_{inf} \gets 0$; $\mathbf{m} \gets \mathbf{0}$\;
\For(\tcp*{Possibly in parallel}){$k = 1,2,\dots,N$}{
	\For{ $\{v_i,v_j\}\in E$}{
		Generate a random sample $\tau_{i,j}^{(k)}$ from a geometric distribution with parameter $\beta$\;
	}
	Compute the all-pairs shortest-path matrix $M^{(k)}$ of the graph $(V,E)$ with weights $\tau_{i,j}^{(k)}$\;
	Compute row-by-row maxima $m^{(k)}_i = \max_j M^{(k)}_{i,j}$\;
	$M_{inf} \gets M_{inf} + M^{(k)}$; $\mathbf{m} \gets \mathbf{m} + \mathbf{m}^{(k)}$\;
}
$M_{inf} \gets M_{inf} / N$; $\mathbf{m} \gets \mathbf{m} / N$\;
\caption{Efficient Monte Carlo estimation of the mean infection time matrix $M_{inf}$ (with $M_{i,j}= \mu(\{v_i\},\{v_i,v_j\})$) and vector $\mathbf{m}$ with $m_i = \mu(\{v_i\}, V)$. } \label{algo:efficientsim}
\end{algorithm}
Some remarks are in order.
\begin{itemize}
	\item The most expensive part of Algorithm~\ref{algo:efficientsim} is the all-pairs shortest-path matrix computation, for a total cost of $O(N|V||E|)$ using the algorithm in~\cite{Thorup1999UndirectedSS} where $N$ is the number of simulations. We report in Section \ref{sec:acc} on how the computation time compares with both the exact computation of \eqref{MIT} and the Monte Carlo simulations of Algorithm \ref{algo:MCS}, and also discuss accuracy and precision.
	\item The construction and the algorithm can be generalized easily to directed graphs, and to deal with edge-dependent infection probabilities $\beta_{i,j}$ (as long as they are independent from one another).
	\item From this alternate description, one can prove easily the following result, which is a discrete-time analogue of part of~\cite[Theorem~2.3]{goeringetal}.
	\begin{corollary}
 	On an undirected graph (possibly with edge-dependent infection probabilities $\beta_{i,j}$), the MIT matrix $M_{inf}$ is symmetric.
	\end{corollary}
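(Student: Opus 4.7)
The plan is to leverage the alternative construction of Theorem~\ref{thm:equidistribution} directly, which reduces the claim to a pathwise symmetry of shortest-path distances. Specifically, under the alternative construction, if $\hat{\mathbf{X}}(0) = s_{\{v_i\}}$, then the first time index at which $v_j$ appears in the infected set is exactly $\operatorname{dist}(v_i,v_j)$, where $\operatorname{dist}$ denotes the shortest-path distance in $G$ with random edge weights $\tau_{i,j}$. Hence
\[
(M_{inf})_{i,j} = \mu_{\{v_i\},\{v_i,v_j\}} = \mathbb{E}[\operatorname{dist}(v_i,v_j)],
\]
where the expectation is taken over the independent geometric samples $\tau_{i,j}$ (one per edge).

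The next step is to observe that the weighted graph in which these distances are computed has \emph{undirected} edges with a single weight $\tau_{i,j} = \tau_{j,i}$ per edge, because the graph $G$ is undirected and a single sample $\tau_{i,j}$ is drawn for each unordered pair $\{v_i,v_j\} \in E$ (this is exactly how the construction preceding Theorem~\ref{thm:equidistribution} is phrased). Shortest-path distance in any undirected weighted graph is a symmetric function of its endpoints: given any walk $v_i = w_0, w_1, \dots, w_\ell = v_j$ of total weight $W$, the reversed walk $v_j = w_\ell, w_{\ell-1}, \dots, w_0 = v_i$ has the same total weight $W$, so the minimum over walks agrees in both directions. Therefore $\operatorname{dist}(v_i,v_j) = \operatorname{dist}(v_j,v_i)$ holds pointwise on every realization of the $\tau_{i,j}$.

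Taking expectations over the $\tau_{i,j}$ then yields
\[
(M_{inf})_{i,j} = \mathbb{E}[\operatorname{dist}(v_i,v_j)] = \mathbb{E}[\operatorname{dist}(v_j,v_i)] = (M_{inf})_{j,i},
\]
which is the desired symmetry. The argument goes through unchanged if each edge $\{v_i,v_j\}$ has its own infection probability $\beta_{i,j}=\beta_{j,i}$, since a single geometric sample $\tau_{i,j}$ with parameter $\beta_{i,j}$ is still attached to each undirected edge.

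I do not foresee a genuine obstacle here: the only subtle point is that the argument requires the \emph{same} random weight $\tau_{i,j}$ to govern the spread in both directions across a given edge, which is precisely what the alternative construction guarantees for undirected graphs (and is precisely the reason the statement would fail on directed graphs, where one would draw independent $\tau_{i,j}$ and $\tau_{j,i}$). The only other thing worth verifying, which is immediate from the construction, is that the rephrasing $(M_{inf})_{i,j} = \mathbb{E}[\operatorname{dist}(v_i,v_j)]$ is legitimate; this is just the observation that in $\hat{\mathbf{X}}$ started at $s_{\{v_i\}}$, a node $v_j$ is infected at time $t$ if and only if $\operatorname{dist}(v_i,v_j)\leq t$, so the first infection time of $v_j$ coincides with $\operatorname{dist}(v_i,v_j)$.
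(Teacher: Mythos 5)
Your proposal is correct and follows essentially the same route as the paper: the paper's proof likewise notes that $M_{inf}$ is the expectation over realizations of the sample shortest-path matrix $M^{(k)}$, which is symmetric for each realization because the edge weights $\tau_{i,j}$ are attached to undirected edges. You simply spell out in more detail the pointwise symmetry of $\operatorname{dist}(v_i,v_j)$ and the identification of the first infection time with the weighted graph distance, which the paper leaves implicit.
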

	\begin{proof}
		The MIT matrix $M_{inf}$ is the mean over all possible samples $\tau_{i,j}^{(k)}$ of the sample MIT matrix $M^{(k)}$, which is symmetric for each $k$.
	\end{proof}
\end{itemize}


\subsection{Accuracy and computational burden of the sampling strategy}\label{sec:acc}

For actual computation of mean infection times for a network, we have to deal with transition matrix whose order is exponential with respect to the number of vertices of the network, as seen in~\eqref{MIT}. The computation is only feasible up to graphs of order $n\approx 15$, due to storage problems. So, it is necessary to turn to algorithms to estimate MITs. Hence, we compare the two Monte Carlo estimations for mean infection times, in terms of speed and accuracy.

In order to see which algorithm is more accurate, we need to have the actual MITs, so we shall consider various graphs on at most $12$ vertices. For each of the two algorithms, we report the mean $\mu$ and the variance $\sigma^2$ of relative errors for entries between the actual and estimated MIT matrices, running $300$ simulations:  
\begin{table}[h!]
\begin{center}
	\begin{tabular}{c|cc}
		\toprule
		\textbf{Graphs} & $(\mu_1,\sigma_1^2)$ & $(\mu_2,\sigma_2^2)$\\ 
		\midrule
		\Cref{Simple_Graph_1} & $(0.0376, 0.0008)$ & $(0.0333,0.0005)$ \\
		Paley graph of order $9$ & $(0.0264, 0.0006)$ & $(0.0214,0.0003)$\\
		Petersen graph & $(0.0241, 0.0005)$ & $(0.0278, 0.0005)$\\
		Star of order $12$  & $(0.0385, 0.0008)$ & $(0.0229, 0.0003)$\\
		$L(3,4)$ & $(0.0239, 0.0004)$ & $(0.0155, 0.0002)$\\
		\bottomrule
	\end{tabular}
	\caption{$(\mu_1,\sigma_1^2)$ corresponds to Algorithm \ref{algo:MCS} and $(\mu_2,\sigma_2^2)$ corresponds to Algorithm \ref{algo:efficientsim}.}\label{tab:table for accuracy and speed}
\end{center}
\end{table}

We now compare running times. The running time of Algorithm~\ref{algo:MCS} mainly depends on the infection probability---that is, the smaller $\beta$ is, the longer each simulation of the time taken for an individual $j$ to become infected takes. In contrast, the most expensive part of Algorithm~\ref{algo:efficientsim} is to compute the all-pairs shortest-path matrix, which does not depend on infection probability. For demonstration, we pick four contact networks  \cite{nr-aaai15}, which are available at \url{https://networkrepository.com}. When a network is disconnected, we choose its largest component. We measure the computation times of the MIT matrix, running $100$ simulations/samplings with different infection probabilities $0.2$, $0.1$, and $0.05$. 

\begin{table}[h!]
	\begin{center}
		\begin{tabular}{c|cc|c|c|c}
			\toprule
			\textbf{Networks} & $|V|$ & $|E|$ & $\left(t_1^{(0.2)},t_2^{(0.2)}\right)$ & $\left(t_1^{(0.1)},t_2^{(0.1)}\right)$ & $\left(t_1^{(0.05)},t_2^{(0.05)}\right)$ \\
			\midrule
			$G_1$ & $92$ & $755$ & $(8.56,0.0677)$ & $(15.30,0.0838)$ & $(29.93,0.0844)$ \\
			$G_2$ & $113$ & $2196$ & $(25.87,0.1076)$ & $(48.97,0.1267)$ & $(95.02,0.1111)$ \\
			$G_3$ & $274$ & $2124$ & $(433.8,0.2370)$ & $(911.2,0.3281)$  & $(1849.5,0.3014)$\\
			$G_4$ & $327$ & $5818$ & $(191,0.5188)$ & $(344.8,0.6023)$ & $(647,0.6193)$\\
			\bottomrule
		\end{tabular}
		\caption{The graphs $G_1,G_2,G_3$ and $G_4$ correspond to `ia-workplace-contacts', `ia-contacts\_hypertext2009', `ia-contact' and `contacts-prox-high-school-2013', respectively. We use $\left(t_1^{(p)},t_2^{(p)}\right)$ to indicate a pair of times in seconds for computing MIT matrix by $100$ simulations/samplings with Algorithms \ref{algo:MCS} and \ref{algo:efficientsim} respectively, where $p$ is the infection probability.}\label{tab:table for time}
	\end{center}
\end{table}

Table \ref{tab:table for time} shows that the approach based on the sampling strategy in Algorithm \ref{algo:efficientsim} is very efficient and scales very well with the size of the population of individuals, thus paving the way for a Markovian analysis of epidemic networks that does not suffer from the curse of dimensionality.

\section{Using mean infection time as a centrality measure} \label{sec:ranking}

Ranking nodes in a graph has many applications in epidemic models, in terms of control strategies such as testing, vaccinating, and so on (see  \cite{dudkina2021node, salathe2010dynamics}). Based on mean infection times, we shall define an indicator for ranking nodes in a graph that retains the information of the dynamics of infectious disease, and we provide several examples for comparison with other indicators. We define this measure of the `centrality' of a node using the mean infection times defined in the previous section, focusing in particular on the time it takes for the entire network to become infected---that is, the mean first passage time of the Markov chain to the absorbing state $s_{V}=(1, \ldots, 1)$.

Given a graph $G$ on $n$ nodes, by \eqref{MIT}, we define the mean infection time $m_i$ from $A=\{v_i\}$ to the set $V$ as 
\begin{align}\label{Formula:MFPT}
m_i=\mu_{A, V} = \be_{s_{v_i}}^\top\left(I-T_{(\overline{S}_{v_i}\cup s_{V})}\right)^{-1}\mathbbm{1}.
\end{align}
Then, $m_i$ indicates the expected time for all individuals to get infected, starting from an initially infected node $v_i$. We note that the author of \cite{dimitriou2006infection} investigated the mean infection time for the case that the infection probability is $1$. Moreover, as an analogous notion of the mean infection times, one may consider the \textit{cover time} of a Markov chain \cite{levin2017markov}, which is the expected value of the first time at which all the states have been visited.

\subsection{Comparison with other indicators on $2$-community networks}\label{subsec5.2:RankingNodes}

Here we compare what nodes are regarded as `most important' according to different ranking indicators. In many social networks, node degree is an important metric to determine popular nodes, i.e., individuals who typically meet a large number of other individuals during a day, or who have many social ties on online social platforms. At the same time, it is also known that in networks with strong community structures, immunization interventions targeted at individuals bridging communities (e.g., families, school classes, working environment) are more effective than those simply targeting highly-connected individuals \cite{salathe2010dynamics}. The reason for this is that regardless of the number of contacts one has, an individual bridging communities may pass the virus from one community to another community, and cause the infection to spread to a fully-susceptible community. 

In what follows, we shall consider four random-walk based indicators or centrality measures, namely, the mean infection times introduced in this section, Kemeny's constant, random walk betweenness (RWB), and random walk centrality (RWC). Here we provide formulae for RWB and RWC; see \eqref{kemeny_formula} for the description of Kemeny's constant. 

The random walk betweenness of node $i$ in a network, denoted $b_i$, measures the `expected net number of times a random walk passes through vertex $i$ on its way from a source vertex $s$ to a target vertex $t$, averaged over all $s$ and $t$', which is given \cite{newman2005measure} by
\begin{align*}
	b_i=\frac{2}{|V|(|V|-1)}\sum_{s<t}I_{i}^{(st)},
\end{align*}
where $I_{i}^{(st)}$ is the current flow through vertex $i$ starting from vertex $s$ and ending at vertex $t$. This method for computing random walk betweenness depends on considering the graph as an electrical network in which each edge has a resistance of 1, and calculating the flow through each vertex (see \cite{newman2005measure} for details). This is then shown to be equivalent to the random walk interpretation given above. 

The random walk centrality of node $i$ in a network quantifies how easily or how quickly a random walker arrives at $i$ from elsewhere in the network \cite{noh2004random}, which is given \cite{kirkland2016random} by the reciprocal of the so-called \textit{accessibility index} $\alpha_i$ of node $i$:
\begin{align*}
\alpha_i = \sum_{k\neq i} {\pi_k m_{k,i}}
\end{align*}
where $\pi_k$ is the stationary distribution for node $k$ and $m_{k,j}$ is the mean first passage time from $k$ to $i$. The accessibility index can be interpreted as the expected time for the random walker to arrive at node $i$, starting from a randomly-chosen node $j$, where this starting node is chosen with respect to the stationary distribution vector (i.e. with respect to the degree of the node). Since the random walk centrality of a node $i$ is $\frac{1}{\alpha_i}$, a large value for RWC indicates that the vertex in question is highly central, where `centrality' is understood in terms of random walks terminating at a target vertex.

Regarding Kemeny's constant, we score important nodes according to how their removal from the network affects the value of Kemeny's constant: $c_i=\mathcal{K}(G')-\mathcal{K}(G)$ where $G'$ is the graph obtained from $G$ by removing node $i$. That is, the higher the increment in Kemeny's constant after deletion of a node is, the more critical the node is (recall that large values of Kemeny's constant correspond to graphs which are not well-connected).

In the comparison that follows, we estimate MITs by Algorithm~\ref{algo:efficientsim} with $300$ samplings. For RWB, RWC and Kemeny's constant, larger values of the indicator correspond to more important nodes, while the reverse occurs for MIT (i.e., lower values of MIT correspond to more important nodes). 

As a first example, consider the lattice graph $L(6,6)$. Figure \ref{fig:onecommu} shows that all indicators agree that nodes placed in the center of the structure are the most important nodes. 

\begin{figure}[h!]
	\begin{center}
		\includegraphics[width=1\textwidth]{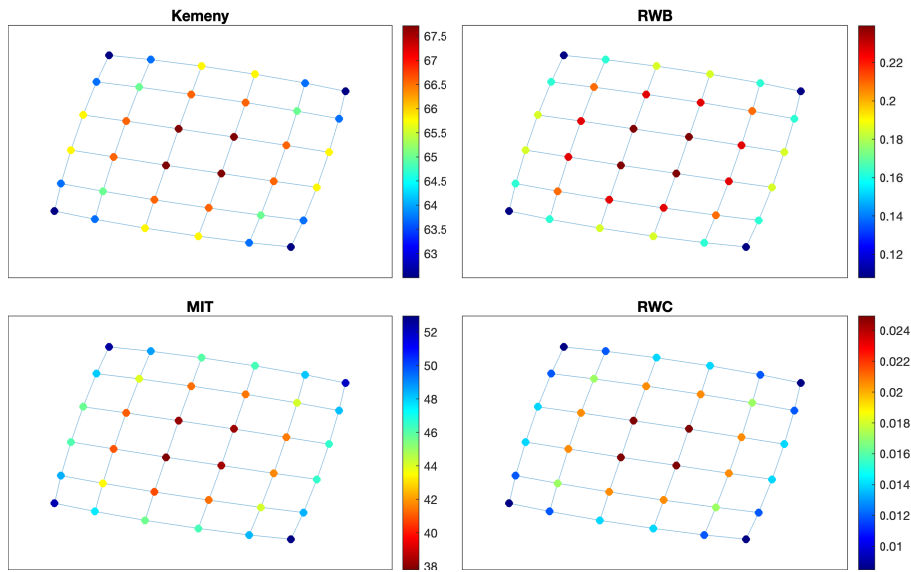}
		\caption{Ranking nodes in $L(6,6)$.}\label{fig:onecommu}
	\end{center}
\end{figure}

We now consider networks formed by two communities as follows: given two lattice graphs $L(n_1,n_1)$ and $L(n_2,n_2)$, we let an edge $e_1$ join a vertex $v_1$ at a corner of $L(n_1,n_1)$ and a vertex $v_2$ at a corner of $L(n_2,n_2)$, and we also let $e_2$ join a vertex adjacent to $v_1$ in $L(n_1,n_1)$ and a vertex adjacent to $v_2$ in $L(n_2,n_2)$. We denote by $G(n_1,n_2)$ the resulting $2$-community structure, and we shall refer to $e_1$ and $e_2$ as the community bridges. 
In particular, we now consider two types of $2$-community structures: \begin{enumerate*}[label=(\arabic*)]
	\item two groups have the same size (i.e., $G(7,7)$), and
	\item one of two groups has a larger size than the other (i.e., $G(10,3)$).
\end{enumerate*}
Let us first consider the case of $G(7,7)$. As seen in Figure \ref{fig:BigComm}, all indicators show that the most critical nodes are positioned near the community bridges, and in particular, vertices $v_1$ and $v_2$ are the most critical ones. However, RWC indicates that the third and fourth ranked nodes are not incident to $e_2$, while the other indicators do. Moreover, Kemeny's constant suggests that nodes farther away from the community bridges have similar criticality scores of the nodes near the bridges; while MITs provide more diversified scores that change gradually (in terms of criticality) as nodes are farther away from the bridges.

\begin{figure}[h!]
	\begin{center}
		\includegraphics[width=1\textwidth]{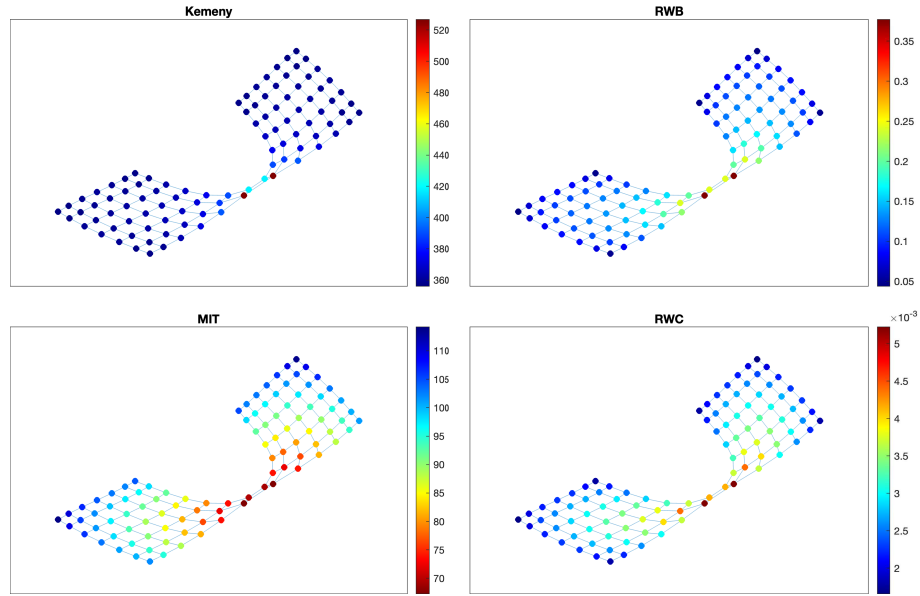}
		\caption{Ranking nodes in $G(7,7)$ based on the random walk-based indicators.}\label{fig:BigComm}
	\end{center}
\end{figure}

We finally consider the case of $G(10,3)$, shown in Figure \ref{fig:TwoComm}. The four cases now provide different indications about what should be classified as the most critical nodes. In particular, the positions of the most critical nodes according to Kemeny's constant, RWB, MIT, and RWC, respectively, vary in distance from the community bridges toward the central area of the larger community. This result appears to suggest that a criticality measure based on Kemeny's constant via node removal could be useful for detecting community bridges regardless of the sizes of the groups in a community structure (which is in accordance with \cite{yilmaz2020kemeny}). Conversely, RWC appears to rank highly the nodes that lie in the central area of the relatively larger community. MIT and RWB provide intermediate indications that take into account both the sizes of the groups and the presence of bridges (where RWB gives more importance to the community bridges than MIT). Finally, the values of the scores from MIT are spread more evenly across their (linear) scale. 

\begin{figure}[h!]
	\begin{center}
		\includegraphics[width=1\textwidth]{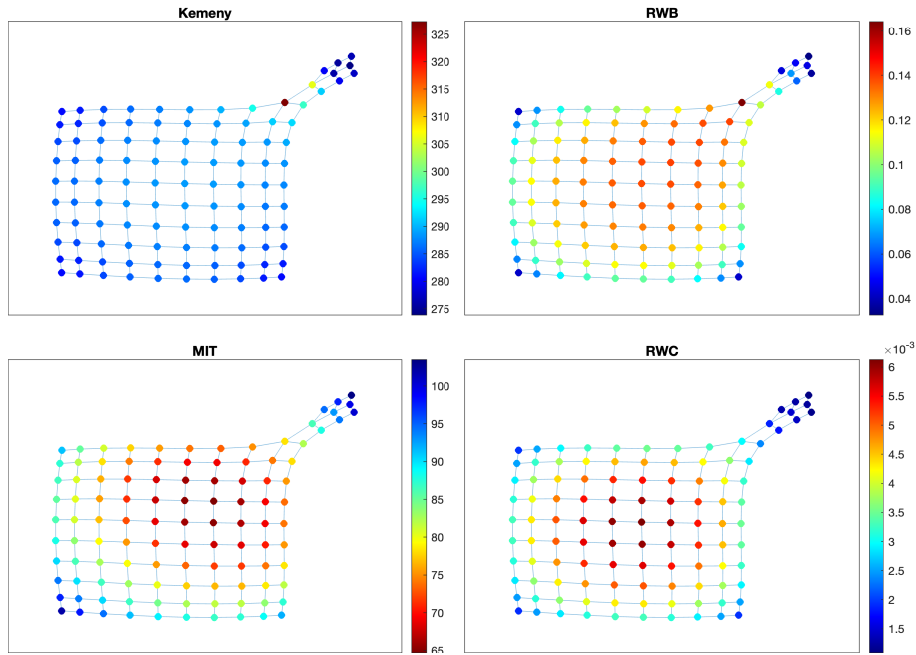}
		\caption{Ranking nodes in $G(10,3)$ based on the random walk-based indicators.}\label{fig:TwoComm}
	\end{center}
\end{figure}


\begin{thebibliography}{10}
	
	\bibitem{Langville2006}
	Langville AN, Meyer CD.
	\newblock {Google’s PageRank and Beyond – The Science of Search Engine
		Rankings}.
	\newblock Princeton, NJ: Princeton University Press; 2006.
	
	\bibitem{Yan2014}
	Yan J, He H, Sun Y.
	\newblock Integrated Security Analysis on Cascading Failure in Complex
	Networks.
	\newblock IEEE Transactions on Information Forensics and Security.
	2014;9(3):451--463.
	
	\bibitem{Sporns2018}
	Sporns O.
	\newblock Graph theory methods: applications in brain networks.
	\newblock Dialogues in Clinical Neuroscience. 2018;20(2):111--121.
	
	\bibitem{Vespignani2012}
	Vespignani A.
	\newblock Modelling dynamical processes in complex socio-technical systems.
	\newblock Nature Physics. 2012;8:32--39.
	
	\bibitem{Crisostomi2011}
	Crisostomi E, Kirkland S, Shorten R.
	\newblock A {G}oogle-like model of road network dynamics and its application to
	regulation and control.
	\newblock International Journal of Control. 2011;84(3):633--651.
	
	\bibitem{yilmaz2020kemeny}
	Yilmaz S, Dudkina E, Bin M, Crisostomi E, Ferraro P, Murray-Smith R, et~al.
	\newblock {Kemeny-based testing for COVID-19}.
	\newblock Plos one. 2020;15(11):e0242401.
	
	\bibitem{levene2002kemeny}
	Levene M, Loizou G.
	\newblock Kemeny's constant and the random surfer.
	\newblock The American mathematical monthly. 2002;109(8):741--745.
	
	\bibitem{salathe2010dynamics}
	Salath{\'e} M, Jones JH.
	\newblock Dynamics and control of diseases in networks with community
	structure.
	\newblock PLoS computational biology. 2010;6(4):e1000736.
	
	\bibitem{dimitriou2006infection}
	Dimitriou T, Nikoletseas S, Spirakis P.
	\newblock The infection time of graphs.
	\newblock Discrete Applied Mathematics. 2006;154(18):2577--2589.
	
	\bibitem{giakkoupis2019spread}
	Giakkoupis G, Mallmann-Trenn F, Saribekyan H.
	\newblock How to spread a rumor: Call your neighbors or take a walk?
	\newblock In: Proceedings of the 2019 ACM Symposium on Principles of
	Distributed Computing; 2019. p. 24--33.
	
	\bibitem{lam2012information}
	Lam H, Liu Z, Mitzenmacher M, Sun X, Wang Y.
	\newblock Information dissemination via random walks in d-dimensional space.
	\newblock In: Proceedings of the Twenty-Third Annual ACM-SIAM Symposium on
	Discrete Algorithms. SIAM; 2012. p. 1612--1622.
	
	\bibitem{pettarin2010infectious}
	Pettarin A, Pietracaprina A, Pucci G, Upfal E.
	\newblock Infectious random walks.
	\newblock arXiv preprint arXiv:10071604. 2010;.
	
	\bibitem{kemenysnell}
	Kemeny J, Snell J.
	\newblock Finite {M}arkov {C}hains.
	\newblock Springer-Verlag, New York-Heidelberg; 1976.
	
	\bibitem{kirkland2016kemeny}
	Kirkland S, Zeng Z.
	\newblock Kemeny's constant and an analogue of {B}raess' paradox for trees.
	\newblock The Electronic Journal of Linear Algebra. 2016;31:444--464.
	
	\bibitem{hunter2006mixing}
	Hunter JJ.
	\newblock Mixing times with applications to perturbed {M}arkov chains.
	\newblock Linear Algebra and its Applications. 2006;417(1):108--123.
	
	\bibitem{kirkland2010fastest}
	Kirkland S.
	\newblock Fastest expected time to mixing for a {M}arkov chain on a directed
	graph.
	\newblock Linear Algebra and its Applications. 2010;433(11-12):1988--1996.
	
	\bibitem{daley2001epidemic}
	Daley DJ, Gani J.
	\newblock Epidemic modelling: an introduction.
	\newblock 15. Cambridge University Press; 2001.
	
	\bibitem{Genois2018}
	G{'e}nois M, Barrat A.
	\newblock Can co-location be used as a proxy for face-to-face contacts?
	\newblock EPJ Data Science. 2018;7(1):11.
	\newblock doi:{10.1140/epjds/s13688-018-0140-1}.
	
	\bibitem{d2021stiffness}
	D’Ambrosio R, Giordano G, Mottola S, Paternoster B.
	\newblock Stiffness analysis to predict the spread out of fake information.
	\newblock Future Internet. 2021;13(9):222.
	
	\bibitem{shrivastava2020defensive}
	Shrivastava G, Kumar P, Ojha RP, Srivastava PK, Mohan S, Srivastava G.
	\newblock Defensive modeling of fake news through online social networks.
	\newblock IEEE Transactions on Computational Social Systems.
	2020;7(5):1159--1167.
	
	\bibitem{dudkina2021node}
	Dudkina E, Bin M, Breen J, Crisostomi E, Ferraro P, Kirkland S, et~al.
	\newblock On node ranking in graphs.
	\newblock arXiv preprint arXiv:210709487. 2021;.
	
	\bibitem{noh2004random}
	Noh JD, Rieger H.
	\newblock Random walks on complex networks.
	\newblock Physical review letters. 2004;92(11):118701.
	
	\bibitem{newman2005measure}
	Newman ME.
	\newblock A measure of betweenness centrality based on random walks.
	\newblock Social networks. 2005;27(1):39--54.
	
	\bibitem{ahn2014mixing}
	Ahn HJ, Hassibi B.
	\newblock On the mixing time of the {SIS} {M}arkov chain model for epidemic
	spread.
	\newblock In: 53rd IEEE Conference on Decision and Control. IEEE; 2014. p.
	6221--6227.
	
	\bibitem{ganesh2005effect}
	Ganesh A, Massouli{\'e} L, Towsley D.
	\newblock The effect of network topology on the spread of epidemics.
	\newblock In: Proceedings IEEE 24th Annual Joint Conference of the IEEE
	Computer and Communications Societies.. vol.~2. IEEE; 2005. p. 1455--1466.
	
	\bibitem{van2008virus}
	Van~Mieghem P, Omic J, Kooij R.
	\newblock Virus spread in networks.
	\newblock IEEE/ACM Transactions On Networking. 2008;17(1):1--14.
	
	\bibitem{pare2018analysis}
	Par{\'e} PE, Liu J, Beck CL, Kirwan BE, Ba{\c{s}}ar T.
	\newblock Analysis, estimation, and validation of discrete-time epidemic
	processes.
	\newblock IEEE Transactions on Control Systems Technology. 2018;28(1):79--93.
	
	\bibitem{ding2020leveraging}
	Ding N, Williams S, Liu Y, Li XS.
	\newblock Leveraging One-Sided Communication for Sparse Triangular Solvers.
	\newblock In: Proceedings of the 2020 SIAM Conference on Parallel Processing
	for Scientific Computing. SIAM; 2020. p. 93--105.
	
	\bibitem{bulucc2009parallel}
	Bulu{\c{c}} A, Fineman JT, Frigo M, Gilbert JR, Leiserson CE.
	\newblock Parallel sparse matrix-vector and matrix-transpose-vector
	multiplication using compressed sparse blocks.
	\newblock In: Proceedings of the twenty-first annual symposium on Parallelism
	in algorithms and architectures; 2009. p. 233--244.
	
	\bibitem{goeringetal}
	Goering M, Albin N, Poggi{-}Corradini P, Scoglio CM, Sahneh FD.
	\newblock Numerical investigation of metrics for epidemic processes on graphs.
	\newblock In: Matthews MB, editor. 49th Asilomar Conference on Signals, Systems
	and Computers, {ACSSC} 2015, Pacific Grove, CA, USA, November 8-11, 2015.
	{IEEE}; 2015. p. 1317--1322.
	\newblock Available from: \url{https://doi.org/10.1109/ACSSC.2015.7421356}.
	
	\bibitem{Thorup1999UndirectedSS}
	Thorup M.
	\newblock Undirected single-source shortest paths with positive integer weights
	in linear time.
	\newblock J ACM. 1999;46:362--394.
	
	\bibitem{nr-aaai15}
	Rossi RA, Ahmed NK.
	\newblock The Network Data Repository with Interactive Graph Analytics and
	Visualization.
	\newblock In: AAAI; 2015.Available from: \url{http://networkrepository.com}.
	
	\bibitem{levin2017markov}
	Levin DA, Peres Y.
	\newblock Markov chains and mixing times. vol. 107.
	\newblock American Mathematical Soc.; 2017.
	
	\bibitem{kirkland2016random}
	Kirkland S.
	\newblock Random walk centrality and a partition of {K}emeny’s constant.
	\newblock Czechoslovak Mathematical Journal. 2016;66(3):757--775.
	
\end{thebibliography}
	
\end{document}